\newcommand{\erasure}{\ensuremath{\vartimes}}
\newcommand{\F}{\ensuremath{\mathbb{F}}}
\newcommand{\N}{\ensuremath{\mathbb{N}}}
\renewcommand{\vec}[1]{\ensuremath{\mathbf{#1}}}
\newcommand{\coloneq}{\ensuremath{\mathrel{\mathop:}=}}
\newcommand{\pub}{\ensuremath{\underline{p}}}
\newcommand{\pob}{\ensuremath{\overline{p}}}
\newcommand{\T}[2]{\ensuremath{T_{#2}^{(#1)}}}
\newcommand{\Tz}[1]{\ensuremath{\T{z}{#1}}}
\newcommand{\Tzo}{\ensuremath{\T{z}{1}}}
\newcommand{\Tzk}{\ensuremath{\T{z}{k}}}
\newcommand{\Tzz}{\ensuremath{\T{z}{z}}}
\newcommand{\estimate}[1]{\tilde{#1}}
\newcommand{\compared}[1]{\widehat{#1}}
\newcommand{\C}{\ensuremath{\mathcal{C}}}
\newcommand{\innerobj}[1]{\ensuremath{#1^\mathrm{i}}}
\newcommand{\Cin}{\innerobj{\C}}
\newcommand{\nin}{\innerobj{n}}
\newcommand{\kin}{\innerobj{k}}
\newcommand{\din}{\innerobj{d}}
\newcommand{\Rin}{\innerobj{R}}
\newcommand{\iin}{\innerobj{\vec{a}}}
\newcommand{\cin}{\innerobj{\vec{c}}}
\newcommand{\rin}{\innerobj{\vec{r}}}
\newcommand{\ein}{\innerobj{\vec{e}}}
\newcommand{\cestin}{\innerobj{\estimate{\vec{c}}}}
\newcommand{\iestin}{\innerobj{\estimate{\vec{a}}}}
\newcommand{\decin}{\innerobj{\mathrm{dec}}}
\newcommand{\outerobj}[1]{\ensuremath{#1^\mathrm{o}}}
\newcommand{\Cout}{\outerobj{\C}}
\newcommand{\nout}{\outerobj{n}}
\newcommand{\kout}{\outerobj{k}}
\newcommand{\dout}{\outerobj{d}}
\newcommand{\iout}{\outerobj{\vec{a}}}
\newcommand{\cout}{\outerobj{\vec{c}}}
\newcommand{\rout}{\outerobj{\vec{r}}}
\newcommand{\rcompout}{\outerobj{\compared{\vec{r}}}}
\theoremstyle{plain}
\theoremstyle{plain}
   \newtheorem{theorem}{Theorem}}
\theoremstyle{plain}
   \newtheorem{lemma}{Lemma}}
\theoremstyle{plain}
   \newtheorem{corollary}{Corollary}}
\begin{document}

\title{Optimal Threshold--Based Multi--Trial Error/Erasure Decoding with the Guruswami--Sudan Algorithm}

\IEEEoverridecommandlockouts

\author{
\authorblockN{Christian Senger, Vladimir R. Sidorenko, Martin Bossert}\thanks{This work has been supported by DFG,
Germany, under grant BO~867/21-1. Vladimir Sidorenko is on leave from IITP, Russian Academy of Sciences, Moscow, Russia.}
\authorblockA{\small Inst. of Telecommunications and Applied Information Theory\\
Ulm University, Ulm, Germany \\
\{christian.senger$\;\vert\;$vladimir.sidorenko$\;\vert\;$martin.bossert\}@uni-ulm.de}
\and
\authorblockN{Victor V. Zyablov}
\authorblockA{\small Inst. for Information Transmission Problems\\
Russian Academy of Sciences, Moscow, Russia \\
zyablov@iitp.ru}
}

\maketitle

\begin{abstract}
Traditionally, multi--trial error/erasure decoding of {\em Reed--Solomon (RS)} codes is based on {\em Bounded Minimum Distance (BMD)} decoders with an erasure option. Such decoders have error/erasure tradeoff factor $\lambda=2$, which means that an error is twice as expensive as an erasure in terms of the code's minimum distance. The {\em Guruswami--Sudan (GS)} list decoder can be considered as state of the art in algebraic decoding of RS codes. Besides an erasure option, it allows to adjust $\lambda$ to values in the range $1<\lambda\leq2$. Based on previous work \cite{senger_sidorenko_bossert_zyablov:2010b}, we provide formulae which allow to optimally (in terms of residual codeword error probability) exploit the erasure option of decoders with arbitrary $\lambda$, if the decoder can be used $z\geq 1$ times. We show that BMD decoders with $z_\mathrm{BMD}$ decoding trials can result in lower residual codeword error probability than GS decoders with $z_\mathrm{GS}$ trials, if $z_\mathrm{BMD}$ is only slightly larger than $z_\mathrm{GS}$. This is of practical interest since BMD decoders generally have lower computational complexity than GS decoders.
\end{abstract}

\section{Introduction}\label{sec:intro}

{\em Multi--trial error/erasure (MTEE)} decoding or {\em Generalized Minimum Distance (GMD)} decoding  \cite{forney:1966b, forney:1966a} is a technique which applies multiple decoding trials of an error/erasure decoder on each received word, each time with a different number of erased most unreliable symbols. The ideas behind this approach are to not let unreliable received symbols interfere the decoding process and to exhaustively try the set of most promising erasure patterns. MTEE decoding performs surprisingly well, especially when the channel is in good shape. This is naturally the case when we consider concatenated codes. Here, the inner code and the channel can be considered jointly as a {\em super channel} which, due to the inner decoder's error--correcting capabilities, has low symbol error probability.

We investigate a particular concatenated code construction which is widely used in practice and standards, e.g.  the Consultative Committee for Space Data System's (CCSDS) Telemetry Channel \cite{ccsds:2002}. In this construction, the inner code is a convolutional code with a {\em Maximum Likelihood (ML)} decoder. The outer code is a traditional {\em Reed--Solomon (RS)} code. We stress that the inner code needs to be {\em tailbited} to insulate channel error events to single symbols of the outer received word.

Our target is to minimize the residual codeword error probability after decoding. We consider {\em threshold erasing}, which means that each output of the inner ML decoder is measured against a set of $z\geq 1$ real thresholds $\Tzo, \ldots, \Tzz$, $\Tzo\leq\cdots \leq \Tzz$. If the reliability of the symbol (which is an output of the inner ML decoder) falls below threshold $\Tzk$ in decoding trial $k$, $1\leq k\leq z$, then the symbol is erased and replaced by the {\em erasure marker} $\erasure$. The threshold erasing method dates back to Blokh and Zyablov \cite{blokh_zyablov:1982} and is different from the {\em symbol erasing} method used in Forney's original work about GMD decoding. There, the received symbols are ordered according to their reliabilities and an increasing number of most unreliable received symbols is erased in each of the $z$ decoding trials. 

Currently, the most powerful technique for algebraic decoding of RS codes is the {\em Guruswami--Sudan (GS)} list decoder \cite{guruswami_sudan:1999}. It can be parametrized to obtain error/erasure tradeoff factors $\lambda$ in the range $1<\lambda\leq 2$. $\lambda$ expresses the relative cost of errors compared to erasures in terms of required {\em Hamming distance}. Generally, increasing the {\em multiplicity parameter} $\nu$ brings along higher list decoding radius, increased decoding complexity, and smaller $\lambda$. We will elaborate the latter fact in the course of the paper.

The GS decoder has been extended to a soft--input algorithm by K\"otter and Vardy in their award--winning 2003 paper \cite{koetter_vardy:2003}. Their algorithm is based on setting the multiplicity of each interpolation point in the GS decoder according to the reliability of the corresponding received symbol. Another promising approach for soft--input decoding of RS codes has recently been published by Nguyen et. al. \cite{nguyen_pfister_narayanan:2011} and is based on {\em rate--distortion theory}. In our work, we investigate the potential of threshold erasing, when the outer code is decoded in multiple trials with the GS decoder. The results are based on our previous papers \cite{senger_sidorenko_zyablov:2009b, senger_sidorenko_bossert_zyablov:2010b}, in which we consider outer BMD decoding ($\lambda=2$) of {\em Bose--Chaudhuri--Hocquenghem} codes and outer decoding of {\em Interleaved Reed--Solomon (IRS)} codes ($\lambda=(\ell+1)/\ell$, $\ell\in\N\setminus\{0\}$), respectively.

For the sake of completeness we should also mention other publications on related topics, e.g. maximization of the decoding radius of concatenated block codes with an outer $\lambda$--decoder using threshold erasing \cite{senger_sidorenko_bossert_zyablov:2008a, senger_sidorenko_bossert_zyablov:2010a} and symbol erasing \cite{sidorenko_senger_bossert_zyablov:2008, sidorenko_chaaban_senger_bossert:2009, sidorenko_senger_bossert_zyablov:2010}. Outer list decoders have already been considered by Nielsen \cite{nielsen:2001}, but with the aim of maximizing the decoding radius of the concatenated code construction. An overview of the different erasing techniques with an arbitrary number of decoding trials is given in \cite{weber_abdel-ghaffar:2003}.

The rest of the paper is organized as follows. In Section~\ref{sec:conc}, we describe structure and threshold--based MTEE decoding of the aforementioned concatenated code construction. We use results from \cite{senger_sidorenko_zyablov:2009b, senger_sidorenko_bossert_zyablov:2010b} to derive optimal threshold locations for outer decoding with $1<\lambda\leq 2$ in Section~\ref{sec:thresholds}. Here and in the rest of the paper, {\em optimal} means {\em minimizing the residual codeword error probability}. Section~\ref{sec:nonconstant} deals with the GS decoder's non--constant $\lambda$ and shows how our result from Section~\ref{sec:thresholds} can be applied nevertheless. Optimal threshold locations are used in Section~\ref{sec:sim} to plot error probability curves of an exemplary concatenated code.
It will turn out, that for the considered setting the high--complexity GS decoder is in many cases not worth the effort and multiple trials of low--complexity BMD decoding yield comparable or even lower residual codeword error probabilities. 
We conclude our paper in Section~\ref{sec:conclusions}.

\section{MTEE Decoding of Concatenated Codes}\label{sec:conc}

A concatenated code $\C(n, k, d)$ consists of an {\em inner code} $\Cin(\F_2; \nin, \kin=m, \din)$ and an {\em outer code} $\Cout(\F_{2^m}; \nout, \kout, \dout)$. The resulting concatenated code $\C$ is binary and, w.l.o.g., we restrict ourselves to this most practical case.

The information vector $\iout\in\F_{2^m}^{\kout}$ is encoded into an outer codeword $\cout\coloneq(\outerobj{c}_0, \ldots, \outerobj{c}_{\nout-1}) \in\Cout\subseteq\F_{2^m}^{\nout}$ of the outer code. Each $2^m$--ary symbol $\outerobj{c}_j$, $j=0, \ldots, \nout-1$, can be interpreted as a binary vector $\iin_j\in\F_2^{\kin}$ of length $m$. These vectors serve as information for the inner code and are encoded into inner codewords $\cin_j\in\Cin\subseteq\F_2^{\nin}$. Arranging the $\cin_j$ as columns of a matrix gives the codeword matrix of the concatenated code $\C$, which is transmitted over a {\em binary symmetric channel (BSC)} with crossover probability $p$.

The receiver obtains erroneous columns $\rin_j\coloneq\cin_j+\ein_j$, which are fed into the ML decoder for $\Cin$. It returns inner codeword estimates $\cestin_j\coloneq\decin(\rin_j)$. The information parts $\iestin_j$ are extracted from the $\cestin_j$ and mapped to symbols $\outerobj{r}_j\in\F_{2^m}$. The resulting vector $\rout\coloneq(\outerobj{r}_0, \ldots, \outerobj{r}_{\nout-1})$ is the input for the MTEE decoder of $\Cout$. The MTEE decoder performs  erasing with the threshold set $\mathcal{T}\coloneq\left\{\Tz1, \ldots, \Tzz\right\}$, \mbox{$\Tzo\leq\cdots \leq \Tzz$}. It calculates a {\em reliability value} $v_j$ for every received symbol $\outerobj{r}_j\triangleq\iestin_j$ according to
\begin{equation*}
  v_j\coloneq
    \frac{1}{\nin}%
      \ln\left(%
        \frac{\Pr\left(\rin_j|\cestin_j\right)}{\sum_{\cin\in\Cin\setminus\left\{\cestin_{\scriptscriptstyle{j}}\right\}} \Pr\left(\rin_j|\cin\right)}%
      \right)
\end{equation*}
and $\mathcal{T}$ is applied in the following manner:
\begin{equation*}
  \outerobj{\compared{r}}_{k, j}\coloneq\left\{\begin{array}{ll}
  \outerobj{r}_j, & \mathrm{if}\;v_j\geq \Tzk\\
  \erasure, & \mathrm{if}\;v_j<\Tzk
  \end{array}\right.,
\end{equation*}
$k=1, \ldots, z$. Note that the particular calculation of the reliability value stems from \cite[Corollary to Theorem 1]{forney:1968} and results in decision regions which minimize both the error- and the error--or--erasure probability of the outer decoder at the same time. Result of the erasing procedure is the {\em input list} $\mathcal{I}\coloneq\left\{\rcompout_1, \ldots, \rcompout_z\right\}$, in which \mbox{$\rcompout_k\coloneq(\outerobj{\compared{r}}_{k, 0}, \ldots, \outerobj{\compared{r}}_{k, \nout-1})$}. Each element of the input list is fed into the outer decoder with \mbox{$1<\lambda\leq 2$} and multiplicity $\nu$. Since we allow the outer decoder to be a list decoder, each decoding trial potentially returns a result list $\rho_k$. These lists are merged into the {\em overall} result list $\mathcal{R}\coloneq\bigcup_{k=1}^z \rho_k$. We have a decoding success whenever $\cout\in\mathcal{R}$.

\section{Optimal Thresholds Locations}\label{sec:thresholds}

As a starting point for our derivation of the optimal threshold locations we generalize \cite[Theorem 1]{senger_sidorenko_bossert_zyablov:2010b}.

Several cases are possible when a single received symbol $\outerobj{r}_j$, which could be either correct ($\outerobj{r}_j=\outerobj{c}_j$) or erroneous (\mbox{$\outerobj{r}_j=\outerobj{c}_j+\outerobj{e}_j$}), is considered. First, the symbol might be correct and never erased by any threshold. We denote the probability of this event by
\begin{equation*}
  p_r := \Pr(\outerobj{r}_j=\outerobj{c}_j\text{ and never erased}).
\end{equation*}
Second, the symbol might be erroneous and never erased, the probability of this event is
\begin{equation*}
  p_l := \Pr(\outerobj{r}_j\neq\outerobj{c}_j\text{ and never erased}).
\end{equation*}
Third, the symbol might be erased by every threshold in $\mathcal{T}$, in this case we do not distinguish whether it is correct or not and denote the probability by
\begin{equation*}
  p_c := \Pr(\outerobj{r}_j\text{ always erased}).
\end{equation*}
The last two cases are for correct and erroneous symbols that are not erased by thresholds $\Tzo, \ldots, \Tz{k}$, but erased by all (larger) thresholds $\Tz{k+1}, \ldots, \Tzz$. The corresponding probabilities are
\begin{align*}
  \pob_k &:= \Pr\left(\outerobj{r}_j=\outerobj{c}_j\text{ and erased by}\;\Tz{k+1}\; \text{but not by}\; \Tz{k}\right)\\
  \pub_k &:= \Pr\left(\outerobj{r}_j\neq\outerobj{c}_j\text{ and erased by}\;\Tz{k+1}\; \text{but not by}\; \Tz{k}\right).
\end{align*}
It is clear that these probabilities must sum up to one, i.e. $p_r+p_l+p_c+\sum_{k=1}^{z-1} (\pob_k+\pub_k)=1$.

Since it is similar to the derivation of \cite[Theorem 1]{senger_sidorenko_bossert_zyablov:2010b}, we omit the generalized derivation here and immediately state the following theorem.

\begin{theorem}\label{thm:conditions}
If the outer decoder has error/erasure tradeoff factor $\lambda$, $1<\lambda\leq 2$, and can correct up to (including) $\delta$ erasures, then the following conditions are necessary and sufficient for an optimal MTEE threshold set \mbox{$\mathcal{T}=\left\{\Tzo, \ldots, \Tzz\right\}$}.
\begin{align*}
  p_l^\frac{1}{\lambda} &= p_c,\\
  p_c &= (\pub_1^\frac{1}{\lambda-1}\, \pob_1)^{1-\frac{1}{\lambda}},
\end{align*}
and
\begin{equation*}
  \forall\,k=1, \ldots, z-2: \pub_k^\frac{1}{\lambda-1}\, \pob_k=\pub_{k+1}^\frac{1}{\lambda-1}\, \pob_{k+1}.
  \end{equation*}
For $\mathcal{T}$ fulfilling these conditions, the residual codeword error probability $P_e$ can be approximated by
\begin{multline}
  P_e\approx p_l^\frac{\delta}{\lambda}= p_c^{\delta}=(\pub_1^\frac{1}{\lambda-1}\, \pob_1)^{\delta(1-\frac{1}{\lambda})}=\cdots\\
  \cdots = (\pub_z^\frac{1}{\lambda-1}\, \pob_z)^{\delta(1-\frac{1}{\lambda})}.\label{eqn:Pe}
\end{multline}
\end{theorem}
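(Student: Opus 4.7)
The plan is to adapt the proof of \cite[Thm.~1]{senger_sidorenko_bossert_zyablov:2010b} from the BMD setting $\lambda=2$ to an arbitrary $\lambda\in(1,2]$. First, I would regard $p_r,p_l,p_c$ and $(\pob_k,\pub_k)_{k=1}^{z-1}$ themselves as the optimisation variables, constrained only by summing to one; every admissible probability vector is realised by some threshold set because the reliability values are continuously distributed.

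Next, decompose the received word into multinomial counts $n_l,n_0,\overline{n}_k,\underline{n}_k$ in the respective categories. The monotonicity $\Tzo\leq\cdots\leq\Tzz$ gives the per--trial error and erasure counts
\begin{equation*}
t_k = n_l+\sum_{k'=k}^{z-1}\underline{n}_{k'},\qquad
f_k = n_0+\sum_{k'=1}^{k-1}\bigl(\overline{n}_{k'}+\underline{n}_{k'}\bigr),
\end{equation*}
and simultaneous failure in every trial amounts to $\lambda t_k+f_k>\delta$ for $k=1,\ldots,z$. Differencing consecutive constraints yields $(\lambda t_k+f_k)-(\lambda t_{k+1}+f_{k+1})=(\lambda-1)\underline{n}_k-\overline{n}_k$, so the dominant failing configurations (in which every trial constraint is tight) satisfy the slice balance $\overline{n}_k=(\lambda-1)\underline{n}_k$ together with the single remaining endpoint constraint $\lambda\bigl(n_l+\sum_k\underline{n}_k\bigr)+n_0\approx\delta$.

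Substituting the slice balance into the multinomial probability, the leading term of $P_e$ is proportional to $p_l^{n_l}p_c^{n_0}\prod_k(\pub_k\pob_k^{\lambda-1})^{\underline{n}_k}$ subject to one linear constraint with coefficients $\lambda,1,\lambda$ on $n_l,n_0,\underline{n}_k$. Maximising this product under that constraint, the dominant contribution to $P_e$ equals $\max\bigl\{p_l^{\delta/\lambda},\,p_c^{\delta},\,(\pub_k\pob_k^{\lambda-1})^{\delta/\lambda}\bigr\}$. Minimising this maximum over the thresholds forces all three expressions to coincide, since otherwise one could perturb the thresholds so as to strictly shrink the dominant term; raising each of the resulting equalities to the power $1-\tfrac{1}{\lambda}$ rewrites them as the three conditions stated in the theorem, and the common value $p_c^{\delta}$ delivers the chain \eqref{eqn:Pe}.

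The main obstacle I anticipate is verifying that this leading-term approximation is sharp, i.e.\ that failing count vectors which violate the slice balance contribute only sub-dominantly to $P_e$. This reduces to a standard multinomial tail estimate, exploiting that in the regime of interest $p_l,p_c,\pob_k,\pub_k\ll p_r$, so that each additional ``bad'' symbol is exponentially suppressed. Once that is in place, necessity and sufficiency of the three equalities both follow from the standard KKT analysis of the resulting min--max problem.
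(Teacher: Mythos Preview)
The paper does not actually spell out a proof: it merely remarks that the derivation is similar to that of \cite[Theorem~1]{senger_sidorenko_bossert_zyablov:2010b} and omits it. Your proposal---to carry over that very argument with the obvious $\lambda$--dependent modifications via the multinomial decomposition, the slice balance $\overline{n}_k=(\lambda-1)\underline{n}_k$, and the resulting min--max equalisation---is therefore exactly the route the paper has in mind, and your outline is sound (in particular your term $(\pub_k\,\pob_k^{\lambda-1})^{\delta/\lambda}$ is the same quantity as the paper's $(\pub_k^{1/(\lambda-1)}\pob_k)^{\delta(1-1/\lambda)}$).
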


In case of BMD- and many other decoders $\delta=\dout-1$. However, we will see later that for the GS decoder we might also require smaller values of $\delta$.

Following \cite{blokh_zyablov:1982}, we state simple approximations for the probabilities of Theorem~\ref{thm:conditions} in our previous paper \cite{senger_sidorenko_bossert_zyablov:2010b}. We repeat them in Lemma~\ref{lemma:approx} to clarify the further derivation of the optimal threshold set. The lemma is based on  spherical approximations of the inner code's Voronoi cells and the exponential error bounds for erasure schemes derived by Forney \cite{forney:1968}, which generalize Gallager's error exponents for the BSC \cite{gallager:1965}.

\begin{lemma}[Senger et. al. \cite{senger_sidorenko_bossert_zyablov:2010b}]\label{lemma:approx}
Simple approximations of the probabilities $p_c, p_l, \pob_k,$ and $\pub_k$ are given by
\begin{align*}
  p_c &\approx \exp\left(-\left(E_0(\Rin)-s\,\Tzo\right)\nin\right), \\
  p_l &\approx \exp\left(-\left(E_0(\Rin)+s\,\Tzz\right)\nin\right), \\
  \pob_k &\approx \exp\left(-\left(E_0(\Rin)-s\,\Tz{k+1}\right)\nin\right), \\
  \pub_k &\approx \exp\left(-\left(E_0(\Rin)+s\,\Tz{k}\right)\nin\right),
\end{align*}
$k=1, \ldots, z-1$, where $E_0(\Rin)$ is Gallager's error exponent for ML decoding of a code with rate $\Rin$ and transmission over a BSC. $s$, $0< s\leq 1/2$, is the corresponding optimization parameter.
\end{lemma}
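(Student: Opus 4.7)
The plan is to combine Forney's exponential error/erasure bounds \cite{forney:1968} with a spherical approximation of the Voronoi cells of $\Cin$. Under this geometric approximation, the inner code behaves like a sphere-packing code on the BSC and Gallager's error exponent $E_0(\Rin)$ becomes applicable, so that for any single reliability threshold $T$ applied to $v_j$ one obtains the canonical Forney expressions
\begin{align*}
\Pr(v_j < T \mid \outerobj{r}_j = \outerobj{c}_j) &\approx \exp\bigl(-(E_0(\Rin) - sT)\nin\bigr), \\
\Pr(v_j \geq T \mid \outerobj{r}_j \neq \outerobj{c}_j) &\approx \exp\bigl(-(E_0(\Rin) + sT)\nin\bigr),
\end{align*}
valid for $0 < s \leq 1/2$. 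These two single-threshold expressions are the only analytic ingredients; the remainder of the argument is an event-reduction.

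Because the thresholds are ordered $\Tzo \leq \cdots \leq \Tzz$, the event ``erased by every threshold'' coincides with $\{v_j < \Tzo\}$ regardless of correctness, and applying the first Forney expression with $T=\Tzo$ directly yields the stated form for $p_c$. Dually, ``incorrect and never erased'' coincides with $\{\outerobj{r}_j \neq \outerobj{c}_j\} \cap \{v_j \geq \Tzz\}$, which gives $p_l$ via the second Forney expression with $T=\Tzz$. For the intermediate events I would decompose
\begin{align*}
\pob_k &= \Pr(v_j < \Tz{k+1},\, \outerobj{r}_j = \outerobj{c}_j) - \Pr(v_j < \Tz{k},\, \outerobj{r}_j = \outerobj{c}_j), \\
\pub_k &= \Pr(v_j \geq \Tz{k},\, \outerobj{r}_j \neq \outerobj{c}_j) - \Pr(v_j \geq \Tz{k+1},\, \outerobj{r}_j \neq \outerobj{c}_j),
\end{align*}
apply the corresponding Forney expression to each term, and retain only the dominant exponent. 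Since $\Tz{k+1} > \Tz{k}$ makes $E_0(\Rin) - s\Tz{k+1} < E_0(\Rin) - s\Tz{k}$ in the first difference and $E_0(\Rin) + s\Tz{k} < E_0(\Rin) + s\Tz{k+1}$ in the second, the first summand dominates in each case, reproducing exactly the four approximations stated in the lemma.

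The main obstacle I expect is justifying this last dominance step: one must argue that the subtracted ``leaking'' probability from the adjacent reliability bin is of strictly smaller exponential order, so the difference inherits the exponent of the leading term rather than canceling with it. This follows from the strict monotonicity of $E_0(\Rin) \mp sT$ in $T$ over the admissible range $0 < s \leq 1/2$ of Gallager's parameter. The full verification, including the geometric (spherical) reduction that makes the BSC exponent $E_0(\Rin)$ available for the arbitrary linear binary inner code $\Cin$, is carried out in \cite{senger_sidorenko_bossert_zyablov:2010b}, from which this lemma is essentially a reprise.
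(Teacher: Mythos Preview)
Your proposal is correct and matches the paper's stated basis for the lemma: the paper does not prove Lemma~\ref{lemma:approx} at all but simply cites \cite{senger_sidorenko_bossert_zyablov:2010b} and remarks that it rests on spherical approximations of the inner Voronoi cells together with Forney's exponential error/erasure bounds \cite{forney:1968}, which is exactly the combination you outline. One small point: the Forney expressions are more naturally joint probabilities (e.g., $\Pr(v_j\geq T,\ \outerobj{r}_j\neq\outerobj{c}_j)$) rather than the conditionals you wrote, which matters slightly for $p_c$ since it is unconditional on correctness; but the dominant-exponent reasoning you give handles this as well.
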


The conditions from Theorem~\ref{thm:conditions} and the approximations from Lemma~\ref{lemma:approx} allow to obtain analytic formulae for the optimal threshold locations. Their number $z$, the rate $\Rin$ of the inner code and $\lambda$ are parameters. Inserting the approximations into the conditions results in the following system of $z$ recurrent equations.
\begin{align}
  p_l^\frac{1}{\lambda} &= p_c\Longleftrightarrow\nonumber\\
  \frac{1}{\lambda}\left(%
    E_0(\Rin)+s\,\Tzz%
  \right)%
  &= E_0(\Rin)-s\,\Tzo\label{eqn:eqn1},
\end{align}
\begin{align}
  p_c &= (\pub_1^\frac{1}{\lambda-1}\, \pob_1)^{1-\frac{1}{\lambda}}\Longleftrightarrow\nonumber\\
  (\lambda+1)\Tz{1} &= (\lambda-1)\Tz{2}\label{eqn:eqn2},
\end{align}
and, $\forall\,k=1,\ldots, z-2$,
\begin{align}
  \pub_k^\frac{1}{\lambda-1}\, \pob_k &= \pub_{k+1}^\frac{1}{\lambda-1}\, \pob_{k+1}\Longleftrightarrow\nonumber\\
  \frac{1}{\lambda-1} \left(\lambda\Tz{k+1}-\Tz{k}\right) &= \Tz{k+2}\label{eqn:eqn3}.
\end{align}
Equations (\ref{eqn:eqn1}), (\ref{eqn:eqn2}), and (\ref{eqn:eqn3}) allow to prove our main theorem.

\begin{theorem}\label{thm:locationBD}
The optimal threshold set $\mathcal{T}=\left\{\Tzo, \ldots, \Tzz\right\}$ for MTEE decoding of a concatenated code with an inner ML decoder and an outer decoder with error/erasure tradeoff factor $\lambda$, $1<\lambda< 2$, is given by
\begin{equation*}\label{eqn:TzkBD}
  \Tzk:=\frac{E_0(\Rin)}{s}\cdot F(\lambda),
\end{equation*}
where $E_0(\Rin)$ is Gallager's error exponent for the BSC, $s$ is the corresponding optimization parameter, $0<s\leq\frac{1}{2}$, and
\begin{equation}\label{eqn:F}
  F(\lambda):=%
  \frac{%
    2 \left(\frac{1}{\lambda-1}\right)^{k-1}-\lambda
  }{%
    2 \left(\frac{1}{\lambda-1}\right)^{z-1}-\lambda
  }.
\end{equation}

\end{theorem}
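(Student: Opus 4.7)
The proof amounts to solving the $z\times z$ linear boundary-value problem given by (\ref{eqn:eqn1})--(\ref{eqn:eqn3}) for the thresholds $T_1^{(z)},\ldots,T_z^{(z)}$ and matching the resulting closed form to the right-hand side of (\ref{eqn:F}). The key observation is that (\ref{eqn:eqn3}) is a second-order linear recurrence with constant coefficients, (\ref{eqn:eqn2}) is an ``initial'' relation between $T_1^{(z)}$ and $T_2^{(z)}$, and (\ref{eqn:eqn1}) is a single affine normalizing equation coupling $T_1^{(z)}$, $T_z^{(z)}$, and the exponent $E_0(\Rin)$. My plan is therefore to solve the recurrence in general form, impose (\ref{eqn:eqn2}) to eliminate one of the two free constants, and then use (\ref{eqn:eqn1}) to fix the overall scale.

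First I would rewrite (\ref{eqn:eqn3}) as $(\lambda-1)T_{k+2}^{(z)} - \lambda T_{k+1}^{(z)} + T_k^{(z)} = 0$, whose characteristic polynomial $(\lambda-1)x^2 - \lambda x + 1$ has discriminant $\lambda^2 - 4(\lambda-1) = (\lambda-2)^2$. Since $\lambda \neq 2$ in the assumed range, the two roots $x_1 = 1$ and $x_2 = 1/(\lambda-1)$ are distinct and real, so the general solution takes the form
$$T_k^{(z)} = A + B\left(\frac{1}{\lambda-1}\right)^{k-1}$$
for constants $A,B\in\mathbb{R}$. Substituting this into (\ref{eqn:eqn2}) and using $(\lambda-1)\cdot\frac{1}{\lambda-1}=1$ collapses the condition to $2A + \lambda B = 0$, so $A = -\lambda B/2$. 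This already yields the numerator shape $2(1/(\lambda-1))^{k-1}-\lambda$ that appears in (\ref{eqn:F}).

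The overall scale $B$ is then pinned down by (\ref{eqn:eqn1}), which---after rearranging to $T_z^{(z)} + \lambda T_1^{(z)} = (\lambda-1)E_0(\Rin)/s$ and substituting the partial solution---becomes a single linear equation in $B$ whose solution has the form $B\propto E_0(\Rin)/s$ with a proportionality constant depending only on $\lambda$ and $z$. Dividing by that constant recovers the stated closed form $T_k^{(z)} = (E_0(\Rin)/s)\,F(\lambda)$. Uniqueness---the ``necessary'' direction of the theorem---is automatic: a second-order linear recurrence with two boundary conditions and distinct characteristic roots admits a one-dimensional solution family, which the affine relation (\ref{eqn:eqn1}) then pins down to a single point. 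The main obstacle is purely algebraic: one must track the powers of $1/(\lambda-1)$ carefully so that the terms $\lambda-\lambda^2 = -\lambda(\lambda-1)$ combine cleanly with the $2(1/(\lambda-1))^{z-1}$ contribution so that the prefactor $\lambda-1$ cancels into the denominator of $F(\lambda)$.
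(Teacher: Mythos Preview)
Your approach is correct and is exactly what the paper does (the paper's proof is a single sentence: ``the statement follows from the unique solution of the recurrence relation (\ref{eqn:eqn1}), (\ref{eqn:eqn2}), (\ref{eqn:eqn3})''), with the bonus that you actually spell out the characteristic-equation argument and the two boundary eliminations.

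One remark worth making: if you finish the algebra you sketch in your last paragraph, you will find that the factor $(\lambda-1)$ from the right-hand side of (\ref{eqn:eqn1}) is absorbed by \emph{raising} the exponent in the denominator by one, i.e.
\[
  T_k^{(z)}=\frac{E_0(\Rin)}{s}\cdot
  \frac{2\left(\frac{1}{\lambda-1}\right)^{k-1}-\lambda}
       {2\left(\frac{1}{\lambda-1}\right)^{z}-\lambda},
\]
with exponent $z$ rather than $z-1$ in the denominator. This is consistent with Theorem~\ref{thm:Pe} (whose denominator also carries exponent $z$) and with the sanity check $z=1$, where (\ref{eqn:eqn1}) alone gives $T_1^{(1)}=\tfrac{\lambda-1}{\lambda+1}\,E_0(\Rin)/s$; the printed $F(\lambda)$ with exponent $z-1$ would instead give $T_1^{(1)}=E_0(\Rin)/s$. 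So the exponent $z-1$ in (\ref{eqn:F}) appears to be a typographical slip in the paper, and your derivation will not literally reproduce it---but that is a feature of your computation, not a bug.
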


\begin{proof}
The statement follows from the unique solution of the recurrence relation (\ref{eqn:eqn1}), (\ref{eqn:eqn2}), and (\ref{eqn:eqn3}) for $1<\lambda<2$.
\end{proof}

\begin{corollary}\label{cor:locationBMD}
For outer BMD decoding, i.e. $\lambda=2$, the optimal threshold set is given by
\begin{equation*}
  \Tzk:=\frac{E_0(\Rin) (2k-1)}{s(2z+1)}.
\end{equation*}
\end{corollary}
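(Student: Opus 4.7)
The corollary is precisely the $\lambda=2$ endpoint that was excluded in Theorem~\ref{thm:locationBD}, so I would either take a careful limit of the formula $F(\lambda)$ or, more cleanly, re-solve the recurrence directly at $\lambda=2$. The direct route is preferable, because the reason Theorem~\ref{thm:locationBD} excludes $\lambda=2$ is structural: the characteristic polynomial of recurrence~(\ref{eqn:eqn3}) is $(\lambda-1)x^2-\lambda x+1=0$, whose roots are $1$ and $1/(\lambda-1)$. These roots coincide precisely when $\lambda=2$, so the geometric basis used implicitly in Theorem~\ref{thm:locationBD} degenerates and one must replace it by an \emph{arithmetic} basis.

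The plan is therefore the following. First, specialize~(\ref{eqn:eqn3}) to $\lambda=2$ to obtain the homogeneous linear recurrence $\Tz{k+2}-2\Tz{k+1}+\Tz{k}=0$. Its characteristic polynomial $(x-1)^2$ has the double root $1$, so the general solution is arithmetic: $\Tzk=A+Bk$ for constants $A,B$ to be pinned down by the boundary conditions~(\ref{eqn:eqn1}) and~(\ref{eqn:eqn2}). Second, substitute this general form into~(\ref{eqn:eqn2}) evaluated at $\lambda=2$, namely $3\Tzo=\Tz{2}$; this gives one linear relation between $A$ and $B$. A short computation yields $B=2A'$ and $A=-A'$ for a free parameter $A'$, i.e.\ the compact form $\Tzk=\Tzo\,(2k-1)$, so the thresholds are odd-indexed multiples of $\Tzo$.

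Third, determine the remaining scale $\Tzo$ from the boundary condition~(\ref{eqn:eqn1}) at $\lambda=2$, which reduces to $\Tzz+2\,\Tzo=E_0(\Rin)/s$. Substituting $\Tzz=\Tzo(2z-1)$ from the previous step produces $\Tzo(2z+1)=E_0(\Rin)/s$, whence $\Tzo=E_0(\Rin)/\bigl(s(2z+1)\bigr)$. Multiplying by $(2k-1)$ yields the claimed expression $\Tzk=E_0(\Rin)(2k-1)/\bigl(s(2z+1)\bigr)$.

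There is no real obstacle in this argument; the only subtlety is recognizing why the proof of Theorem~\ref{thm:locationBD} cannot simply be quoted with $\lambda=2$ plugged in (the repeated-root degeneracy), and correspondingly why the arithmetic ansatz $A+Bk$ replaces the geometric ansatz that works for $1<\lambda<2$. As a consistency check, one can verify that taking $\lambda\to 2$ inside~(\ref{eqn:F}) via L'H\^opital's rule with respect to $\mu=1/(\lambda-1)$ produces the same ratio $(2k-1)/(2z+1)$, confirming that the corollary is the continuous limit of Theorem~\ref{thm:locationBD}.
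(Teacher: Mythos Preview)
Your proof is correct and follows exactly the approach indicated in the paper, namely solving the recurrence relations (\ref{eqn:eqn1})--(\ref{eqn:eqn3}) directly at $\lambda=2$; you simply carry out in full the details that the paper leaves implicit, and additionally explain the repeated-root degeneracy that forces the arithmetic ansatz in place of the geometric one underlying Theorem~\ref{thm:locationBD}.
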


\begin{proof}
The statement follows from the unique solution of the recurrence relation (\ref{eqn:eqn1}), (\ref{eqn:eqn2}), and (\ref{eqn:eqn3}) for $\lambda=2$.
\end{proof}

Corollary~\ref{cor:locationBMD} coincides with a result of Blokh and Zyablov \cite{blokh_zyablov:1982}. Thus, we obtain their threshold location formula as a special case of our main Theorem~\ref{thm:locationBD}.

Fig.~\ref{fig:thresholds} shows the optimal threshold sets for $z=20$, $\Rin=1/2$, $p=0.02$, and $\lambda=1.1, \ldots, 1.9, 2.0$. Each line represents one threshold set, Darker color of the curve means larger $\lambda$. The optimal threshold set for outer BMD decoding ($\lambda=2$, see Corollary~\ref{cor:locationBMD}) is given as a reference. Note that $F(\lambda)$ is constant for fixed $\lambda$ and $z$, other crossover  probabilities $p$ of the BSC simply scale the threshold locations by a factor.

\begin{figure}[htbp]
\centering
\includegraphics[width=212pt, clip]{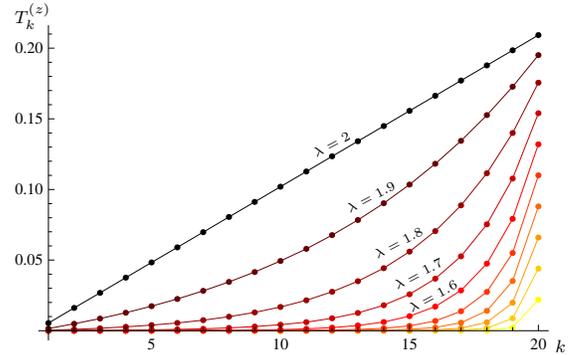}
\caption{Optimal threshold sets according to Theorem~\ref{thm:locationBD} and Corollary~\ref{cor:locationBMD} for $z=20$, $\Rin=1/2$, $p=0.02$, and $\lambda=1.1, \ldots, 1.9, 2.0$.}
\label{fig:thresholds}
\end{figure}

It is easy to prove that $\Tzk$ is non-increasing with decreasing $\lambda$, a fact which can also be observed in Fig.~\ref{fig:thresholds}. This means that with decreasing $\lambda$, the number of erased symbols generally becomes smaller. We could have expected such a behavior since with decreasing $\lambda$, the relative cost of errors decreases and thereby also the effect of erasing unreliable received symbols.

We can use Theorem~\ref{thm:conditions} to obtain an approximation of the residual codeword error probability after MTEE decoding with an optimal threshold set obtained by Theorem~\ref{thm:locationBD}. To do so, we use the second term from (\ref{eqn:Pe}) and write
\begin{equation*}
  P_e\approx p_l^\frac{\delta}{\lambda}.
\end{equation*}
Inserting the approximation of $p_l$ from Lemma~\ref{lemma:approx} gives
\begin{equation*}
  P_e\approx \left(\exp\left(-\left(E_0(\Rin)+s\,\Tzz\right)\nin\right)\right)^\frac{\delta}{\lambda},
\end{equation*}
in which we can replace $\Tzz$ as given by Theorem~\ref{thm:locationBD} for $1<\lambda<2$ or Corollary~\ref{cor:locationBMD} for $\lambda=2$, respectively. This results in the following theorem and its corollary.

\begin{theorem}\label{thm:Pe}
The residual codeword error probability of MTEE decoding of a concatenated code with an inner ML decoder, an outer decoder with error/erasure tradeoff factor $\lambda$, $1<\lambda< 2$, maximal number of correctable erasures $\delta$, and an optimal threshold set $\mathcal{T}=\left\{\Tzo, \ldots, \Tzz\right\}$  can be approximated by
\begin{equation*}
  P_e\approx
  \exp\left(%
    -2 E_0(\Rin)\delta\,\frac{%
      \left(\frac{1}{\lambda-1}\right)^z-1
    }{%
    2\left(\frac{1}{\lambda-1}\right)^z-\lambda
    }\,\nin
  \right).
\end{equation*}
\end{theorem}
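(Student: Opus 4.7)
The plan is to chain three results already in hand: equation~(\ref{eqn:Pe}) from Theorem~\ref{thm:conditions}, the exponential approximation of $p_l$ in Lemma~\ref{lemma:approx}, and the closed-form expression for $\Tzz$ from Theorem~\ref{thm:locationBD}. Once chained in this order, the claim reduces to a routine algebraic simplification that I would present as the bulk of the proof.

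First I would pick the second form appearing in equation~(\ref{eqn:Pe}), namely $P_e\approx p_l^{\delta/\lambda}$, because $p_l$ is already given in Lemma~\ref{lemma:approx} as an explicit exponential. Substituting that approximation yields
\begin{equation*}
  P_e \approx \exp\!\left(-\tfrac{\delta}{\lambda}\bigl(E_0(\Rin)+s\,\Tzz\bigr)\nin\right),
\end{equation*}
so the only remaining quantity in the exponent is $\Tzz$, and no further probabilistic modelling is required.

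Next I would specialize Theorem~\ref{thm:locationBD} at $k=z$ to write $s\,\Tzz = E_0(\Rin)\cdot F(\lambda)\big|_{k=z}$, substitute this into the exponent above, and factor out $E_0(\Rin)\,\delta\,\nin$. The proof then reduces to the algebraic identity
\begin{equation*}
  \frac{1+F(\lambda)\big|_{k=z}}{\lambda}\;=\;\frac{2\bigl((\tfrac{1}{\lambda-1})^{z}-1\bigr)}{2(\tfrac{1}{\lambda-1})^{z}-\lambda}.
\end{equation*}

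The only real obstacle is this final simplification, since every other step is a direct substitution. The key identity I would use is $(\lambda-1)\cdot(\tfrac{1}{\lambda-1})^{z} = (\tfrac{1}{\lambda-1})^{z-1}$, which allows me to bring $1$ and $F(\lambda)\big|_{k=z}$ onto a common denominator and then have the factor $\lambda$ on the left-hand side cancel cleanly against the $(\lambda-1)$ produced by that identity. After cross-multiplying and collecting like terms, the right-hand side appears, and the stated exponent for $P_e$ follows. The hypothesis $1<\lambda<2$ is used only to guarantee that $\tfrac{1}{\lambda-1}\neq 1$ and that the denominators in Theorem~\ref{thm:locationBD} and in the claim are non-zero, so the whole manipulation is well-defined.
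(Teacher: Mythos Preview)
Your proposal is correct and mirrors the paper's argument exactly: the paper also starts from $P_e\approx p_l^{\delta/\lambda}$ in (\ref{eqn:Pe}), inserts the exponential form of $p_l$ from Lemma~\ref{lemma:approx}, and then substitutes $\Tzz$ from Theorem~\ref{thm:locationBD}, leaving only the algebraic simplification you sketch. One caveat for your write--up: the identity $\frac{1+F(\lambda)|_{k=z}}{\lambda}=\frac{2\left((\frac{1}{\lambda-1})^{z}-1\right)}{2(\frac{1}{\lambda-1})^{z}-\lambda}$ holds only if the denominator exponent in (\ref{eqn:F}) is $z$ rather than the printed $z-1$ (a typo --- solving the recurrence (\ref{eqn:eqn1})--(\ref{eqn:eqn3}) yields $z$, consistent with the denominator in Theorem~\ref{thm:Pe}); with the printed exponent you would get $F(\lambda)|_{k=z}=1$ and the simplification would collapse to $P_e\approx\exp(-\tfrac{2\delta}{\lambda}E_0(\Rin)\nin)$, which does not match the claim.
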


\begin{corollary}
For traditional outer BMD decoding, i.e. $\lambda=2$ and $\delta=\dout-1$, we have the approximation
\begin{equation*}
  P_e\approx \exp\left(-2E_0(\Rin)(\dout-1)\,\frac{z}{2z+1}\,\nin\right).
\end{equation*}

\end{corollary}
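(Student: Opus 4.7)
The plan is to mirror the derivation of Theorem~\ref{thm:Pe} but using Corollary~\ref{cor:locationBMD} in place of Theorem~\ref{thm:locationBD}, since the theorem is stated only for $1<\lambda<2$ and the BMD case $\lambda=2$ needs its own (equivalent) substitution. Starting from Theorem~\ref{thm:conditions}, the residual codeword error probability is approximated by $P_e\approx p_l^{\delta/\lambda}$, and Lemma~\ref{lemma:approx} gives
\begin{equation*}
  p_l\approx \exp\bigl(-(E_0(\Rin)+s\,\Tzz)\,\nin\bigr),
\end{equation*}
so the only thing to do is evaluate $E_0(\Rin)+s\,\Tzz$ at the optimal largest threshold.

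Setting $k=z$ in Corollary~\ref{cor:locationBMD} yields
\begin{equation*}
  \Tzz=\frac{E_0(\Rin)(2z-1)}{s(2z+1)},
\end{equation*}
so $E_0(\Rin)+s\,\Tzz=E_0(\Rin)\bigl(1+\tfrac{2z-1}{2z+1}\bigr)=E_0(\Rin)\cdot\tfrac{4z}{2z+1}$. Substituting this, together with $\lambda=2$ and $\delta=\dout-1$, into $P_e\approx p_l^{\delta/\lambda}$ gives
\begin{equation*}
  P_e\approx \exp\!\left(-E_0(\Rin)\cdot\tfrac{4z}{2z+1}\cdot\nin\cdot\tfrac{\dout-1}{2}\right)=\exp\!\left(-2E_0(\Rin)(\dout-1)\,\tfrac{z}{2z+1}\,\nin\right),
\end{equation*}
which is the claimed expression.

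There is essentially no obstacle — the computation is a short algebraic simplification once the correct $\Tzz$ is inserted. The only thing worth a sanity check is consistency with Theorem~\ref{thm:Pe}: writing $\lambda=2-\epsilon$ and expanding $(1/(\lambda-1))^z=(1-\epsilon)^{-z}\approx 1+z\epsilon$, the numerator of the fraction in Theorem~\ref{thm:Pe} behaves like $z\epsilon$ while the denominator behaves like $2(1+z\epsilon)-(2-\epsilon)=(2z+1)\epsilon$. Hence the ratio tends to $z/(2z+1)$ as $\lambda\to 2$, matching the corollary and confirming that the BMD case is exactly the boundary limit of the $1<\lambda<2$ formula.
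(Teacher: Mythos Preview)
Your proof is correct and follows exactly the route the paper indicates: substitute $\Tzz$ from Corollary~\ref{cor:locationBMD} into $P_e\approx p_l^{\delta/\lambda}$ with $\lambda=2$ and $\delta=\dout-1$, then simplify. The limiting argument $\lambda\to 2$ you add as a sanity check is a nice extra consistency verification, but the core derivation matches the paper's approach.
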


So far, we assumed that $\lambda$ is constant for any number of erased symbols. This is true for BMD decoders but not for the GS decoder as we will see in the following section

\section{Dealing with the GS List Decoder's Non--Constant $\lambda$}\label{sec:nonconstant}

The {\em decoder capability function} ({\em DCF}, a constraint on the number $\tau$ of erasures and the number $\varepsilon$ of errors, that can be corrected concurrently) of a BMD decoder is 
\begin{equation*}
  \nout-\tau-2\varepsilon>\kout-1.
\end{equation*}
For $\tau$ erasures, $0\leq\tau\leq\dout-1$, the decoder fails to correct $\varepsilon_\mathrm{BMD}(\tau)\coloneq(\nout-\kout+1-\tau)/2$ or more errors. The indeed constant $\lambda$ for any number of erasures is given by the negative reciprocal value of $\varepsilon_\mathrm{BMD}(\tau)$'s slope, i.e.
\begin{equation*}
  \lambda_\mathrm{BMD}\coloneq-\left(\diff{\varepsilon_\mathrm{BMD}(\tau)}{\tau}\right)^{-1}=2.
\end{equation*}
The situation is different for the GS decoder. For simplicity, we restrict ourselves to the best (in terms of achievable list decoding radius) case, i.e. multiplicity $\nu\rightarrow\infty$. It's DCF is
\begin{equation*}
    \frac{(n-\tau-\varepsilon)^2}{n-\tau}>k-1,
\end{equation*}
resulting in $\varepsilon_\mathrm{GS}(\tau)\coloneq\nout-\tau-\sqrt{(\kout-1)(\nout-\tau)}$ and
\begin{align*}
  \lambda_\mathrm{GS}(\tau) &\coloneq%
  -\left(\left.\diff{\varepsilon_\mathrm{GS}(t)}{t}\right|_{t=\tau}\right)^{-1}\\
  &=\left(1-\frac{k-1}{2\sqrt{(k-1)(n-\tau)}}\right)^{-1},
\end{align*}
which is a strictly monotonic increasing function of $\tau$ and thereby not usable in Theorem~\ref{thm:locationBD}. We will now show that near--optimal threshold locations for the GS decoder can be calculated using Theorem~\ref{thm:locationBD}.

It is straightforward to see that for any $\tau$, a decoder with radius $\varepsilon_\mathrm{GS}(\tau)$ can be transformed into a decoder with radius $\varepsilon'_\mathrm{GS}(\tau)<\varepsilon_\mathrm{GS}(\tau)$ by simply discarding all decoding results with $\tau$ erasures and $\varepsilon\geq \varepsilon'_\mathrm{GS}(\tau)$ errors. This fact and the monotonicity of $\lambda(\tau)$ allow to conclude that any tangent of $\varepsilon_\mathrm{GS}(\tau)$ at $\tau=\kappa$, $0\leq\kappa\leq\dout-1$, specifies a {\em tangent decoder} with radius 
\begin{equation*}
  \varepsilon_{\mathrm{GS}, \kappa}(\tau)\coloneq \varepsilon_\mathrm{GS}(\kappa)+\left.\diff{\varepsilon_\mathrm{GS}(t)}{t}\right|_{t=\kappa}(\tau-\kappa)
\end{equation*}
and constant error/erasure tradeoff factor
\begin{align*}
  \lambda_{\mathrm{GS}, \kappa} &=-\left(\left.\diff{\varepsilon_\mathrm{GS}(t)}{t}\right|_{t=\kappa}\right)^{-1}\\
  &=\left(1-\frac{\kout-1}{2\sqrt{(\kout-1)(\nout-\kappa)}}\right)^{-1},
\end{align*}
that can be imitated by the GS list decoder. Its maximum number of correctable erasures $\delta_{\mathrm{GS}, \kappa}$ is obtained by solving $\varepsilon_{\mathrm{GS}, \kappa}(\tau)=0$ for $\tau$ and taking the floor of the result.

Since $\lambda_{\mathrm{GS}, \kappa}$ is independent of $\tau$, Theorems~\ref{thm:locationBD} and \ref{thm:Pe} can be applied with $\lambda_{\mathrm{GS}, \kappa}$ and $\delta_{\mathrm{GS}, \kappa}$ to obtain optimal threshold locations and residual codeword error probabilities for tangent decoders which can be imitated by the GS decoder. The optimal tangent decoder is determined by
\begin{equation}\label{eqn:optkappa}
  \kappa^*\coloneq
  \arg\min_{0\leq\kappa\leq\dout-1}\left\{
    -\delta_{\mathrm{GS}, \kappa}\,\frac{%
      \left(\frac{1}{\lambda_{\mathrm{GS}, \kappa}-1}\right)^z-1
    }{%
    2\left(\frac{1}{\lambda_{\mathrm{GS}, \kappa}-1}\right)^z-\lambda_{\mathrm{GS}, \kappa}
    }
  \right\},
\end{equation}
which is independent of the the ML error exponent. Thus, tangent decoders determined by (\ref{eqn:optkappa}) are optimal for all BSC crossover probabilities.

\section{Simulation Results -- Traditional BMD Decoding can Beat the GS List Decoder}\label{sec:sim}

\begin{figure*}[htbp]
\centering
\includegraphics[width=490pt, clip]{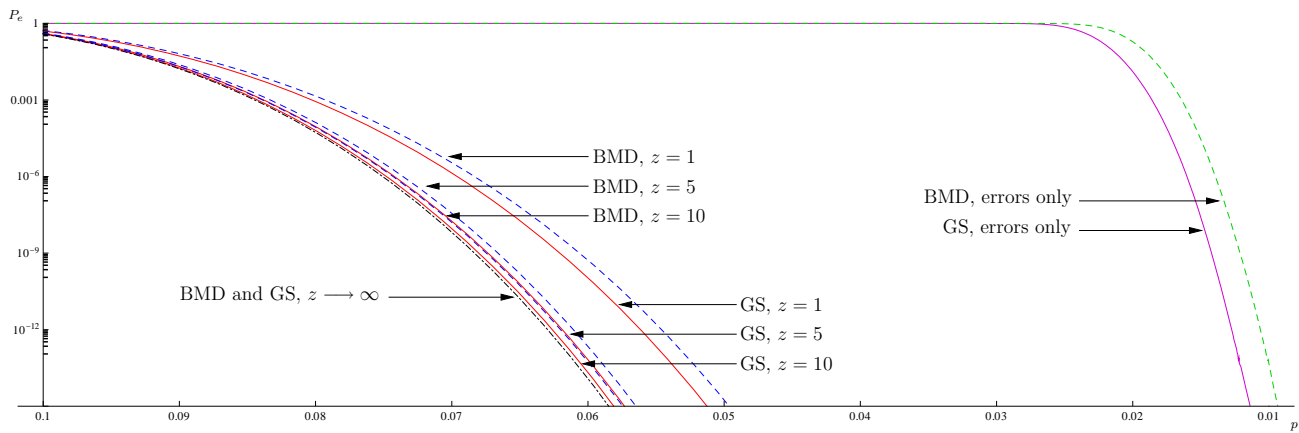}
\caption{Residual codeword error probability curves for $\Cout(\F_{2^8}; 255, 144, 112)$ and $\Rin=1/2$.}
\vspace{-0.3cm}
\label{fig:sim}
\end{figure*}

Let us consider the outer RS code $\Cout(\F_{2^8}; 255, 144, 112)$ with GS decoder. We consider $z_\mathrm{GS}\coloneq 1, 5, 10$ outer decoding trials. Based on (\ref{eqn:optkappa}), Table~\ref{tab:tangentdecoders} states the parameters of the corresponding optimal tangent decoders.
\renewcommand{\arraystretch}{1.2}
\begin{table}[htbp]
\vspace{-0.2cm}
\begin{center}
\begin{tabular}{c|c|c|c}
$z_\mathrm{GS}$ & $\kappa^*$ & $\lambda_{\mathrm{GS}, \kappa^*}$ & $\delta_{\mathrm{GS}, \kappa^*}$\\
\hline\hline
$1$ & $41$ & $1.69126$ & $107$\\
$5$ & $72$ & $1.79208$ & $110$\\
$10$ & $85$ & $1.84699$ & $111$
\end{tabular}
\end{center}
\caption{Optimal tangent decoders for $z_\mathrm{GS}\coloneq 1, 5, 10$.}
\label{tab:tangentdecoders}
\vspace{-0.7cm}
\end{table}
\renewcommand{\arraystretch}{1}

As inner code, we assume a tailbited rate $\Rin=1/2$ convolutional code with ML decoder. This allows to use Theorem~\ref{thm:Pe} in order to plot the solid red residual codeword error curves for outer GS decoding in Fig.~\ref{fig:sim}. Additionally, we consider outer BMD decoding and allow the decoder to be run $z_\mathrm{BMD}\coloneq 1,5, 10$ times (dashed blue curves). We observe that the gain of tangent decoding diminishes for growing $z$. Since both residual codeword error probabilities (optimal tangent decoder and BMD decoder) converge to the same value, i.e. 
\begin{equation*}
P_e\stackrel{z\rightarrow\infty}{\longrightarrow}\exp(-E_0(\Rin)(\dout-1)\nin)
\end{equation*}
(dash--dotted black curve), we conclude that for every number $z_\mathrm{GS}$ of outer GS decoding trials, there exists a number $z_\mathrm{BMD}\geq z_\mathrm{GS}$ of outer BMD decoding trials that achieves either the same or lower residual codeword error probability. This allows to trade a number of high--complexity GS decoding trials for a (generally larger) number of low--complexity BMD decoding trials, extending the options of the system designer.

\section{Conclusions}\label{sec:conclusions}

We generalized our results from \cite{senger_sidorenko_bossert_zyablov:2010b} to the case of arbitrary error/erasure tradeoff factors $\lambda$ in the range $1<\lambda\leq 2$. We derived formulae for optimal thresholds applicable in MTEE decoding, our generalization allows to use the GS list decoder for the outer code. Based on our derivation, we gave approximations of the residual codeword error probability after outer decoding for the full range of $\lambda$. This allowed to compare outer GS list decoding with traditional, low--complexity, BMD decoding. Our main result is that for the particular concatenated coding scheme under consideration (outer RS code, inner convolutional code with ML decoding, e.g. used in\cite{ccsds:2002} ), $z_\mathrm{BMD}$ trials of outer BMD decoding can outperform $z_\mathrm{GS}$ trials of GS decoding if $z_\mathrm{BMD}\geq z_\mathrm{GS}$. This is interesting for practical applications, since BMD decoders have low computational complexity and are widely deployed.

\def\noopsort#1{}

\end{document}